\newtheorem{theorem}{Theorem}
\newtheorem{claim}[theorem]{Claim}
\newtheorem{fact}[theorem]{Fact}
\newcommand{\defeq}{\stackrel{\text{def}}{=}}
\newcommand{\E}{\mathbf{E}}
\DeclarePairedDelimiter{\cardinality}{\vert}{\vert}
\title{Adaptive Greedy Algorithms for Stochastic Set Cover Problems}
\author{Srinivasan Parthasarathy \\ IBM T.J. Watson Research Center \\ Yorktown Heights, NY 10598 \\ \texttt{spartha@us.ibm.com}}
\begin{document}
\maketitle

\begin{abstract}
  We study adaptive greedy algorithms for the problems of stochastic set cover with perfect and imperfect coverages.
  In stochastic set cover with perfect coverage, we are given a set of \textit{items} $\mathcal{A}$
  and a ground set $\mathcal{B}$. Evaluating
  an item reveals its \textit{state} which is a random subset of $\mathcal{B}$ drawn from the \textit{state distribution} of the item.
  Every element in $\mathcal{B}$ is assumed to be present in the state of some item with probability $1$.
  For this problem, we show that the adaptive greedy algorithm has an approximation ratio of $H(\cardinality{\mathcal{B}})$,
  the $\cardinality{\mathcal{B}}^{th}$ Harmonic number. In stochastic set cover
  with imperfect coverage, an element in the ground set need not be present in the state of any item.
  We show a reduction from this problem to the former problem; the adaptive greedy algorithm for
  the reduced instance has an approxiation ratio of $H(\cardinality{\mathcal{E}})$, where $\mathcal{E} \subseteq \mathcal{A} \times \mathcal{B}$
  is the set of pairs $(F, e)$ such that the state of item $F$ contains $e$ with positive probability.
\end{abstract}

\section{Introduction}
We study two variants of the stochastic set cover problem. In both these problems,
we are given a ground set $\mathcal{B}$ and a collection of \textit{items} $\mathcal{A}$.
Each $F \in \mathcal{A}$ is characterized by a
probability distribution $p_F: 2^{\mathcal{B}} \rightarrow [0,1]$. When item $F$ is evaluated,
it reveals a random subset $V(F) \sim p_F$ which is called the state of $F$.
Define $\forall (F,e)  \in \mathcal{A} \times \mathcal{B}: q_F(e) \defeq \Pr[e \in V(F)] = \sum_{V \ni e} p_F(V)$: this is
the probability with which the state of $F$ includes $e$.
We assume that the $q_F$'s are given but
$p_F$'s are not. Repeated evaluations of an item yields
the same state: in particular, evaluations of an item $F$ by two
distinct algorithms yields the same $V(F)$.  We assume states of items are mutually independent:
conditioning on the complete or partial state of items other than $F$
does not alter the state distribution of $F$. Let $C(F)$ denote the cost of evaluating $F$. Our goal is to evaluate a
subset of items $\mathcal{S}$ such that $\E[\sum_{F \in \mathcal{S}} C(F)]$ is minimized subject to
the following constraint:
\begin{gather}
  \Pr\left[\bigcup_{F \in \mathcal{S}} V(F) = \bigcup_{F \in \mathcal{A}} V(F)\right] = 1 \label{eqn:constraint}
\end{gather}
We will focus on \textit{oblivious} algorithms:
conditioning on the event that $F$ was evaluated by an oblivious algorithm does not
alter its state distribution.

The first of our two problems is stochastic set cover with perfect coverage. In this problem, we assume:
\begin{gather}
  \Pr\left[\bigcup_{F \in \mathcal{A}} V(F) = \mathcal{B} \right] = 1 \label{eqn:perfect}
\end{gather}
In other words, every element in $\mathcal{B}$ is included in the state of some item with probability $1$.

Our second problem is stochastic set cover with imperfect coverage which does not involve Assumption (\ref{eqn:perfect}).

\subsection{Related Work}
Stochastic set cover with perfect coverage has been studied in
~\cite{Deshpande:2016:AAS:2930058.2876506,goemans:stochasticcovering,Golovin:2011:AST:2208436.2208448,hellerstein:jair}.
Goemans and Vondr{\'a}k~\cite{goemans:stochasticcovering} study the adaptivity gap in stochastic covering problems\footnote{Stochastic set cover with perfect coverage
is referred to as stochastic set cover without item multiplicities in ~\cite{goemans:stochasticcovering}}.
An adaptive stochastic set cover algorithm is one which evaluates items iteratively, with
the choice of the item in the current iteration being influenced by the item states revealed in
the previous iterations. In contrast, a non-adaptive algorithm needs to choose its items upfront based
only on the knowledge of their state distributions in a manner which satisfies Constraint (\ref{eqn:constraint}). The
adaptivity gap is the maximum possible ratio between the expected cost of the optimal adaptive algorithm vs the optimal
non-adaptive algorithm. For stochastic set cover with perfect coverage, Goemans and Vondr{\'a}k~\cite{goemans:stochasticcovering} show
that the adaptivity gap can be as large as $\Omega(\cardinality{\mathcal{B}})$
and is bounded by $O(\cardinality{\mathcal{B}}^2)$, where $\cardinality{\mathcal{B}}$ is the size of the ground set.

The greedy algorithm for the deterministic set cover problem~\cite{Vazirani:2010:AA:1965254} can be generalized as
an adaptive algorithm for stochastic set cover with perfect coverage. This is the \texttt{Greedy}
algorithm described in Section \ref{sec:greedy}.
Golovin and Krause~\cite{Golovin:2011:AST:2208436.2208448} show that \texttt{Greedy}
has an approximation ratio of $(1 + \ln \cardinality{\mathcal{B}})^2$.
Deshpande, Hellerstein and Kletenik~\cite{Deshpande:2016:AAS:2930058.2876506}
show that \texttt{Greedy} has an approximation ratio of
$O(\max_{F \in \mathcal{A}} \cardinality{supp(p_F)}(1 + \ln \kappa))$ where
$supp(p_F)$ is the support of the state distribution function $p_F$, and $\kappa$ is the maximum cardinality of any state
which is realizable with positive probability. Hellerstein and Kletenik~\cite{hellerstein:jair}
show that \texttt{Greedy} has an approximation ratio of $O(1 + \frac{\ln \kappa}{\eta_E})$
where $\eta_E$ is the smallest expected number of new elements that are covered
when \texttt{Greedy} evaluates an item. In this paper, we show that \texttt{Greedy} has an approximation ratio of
$H(\cardinality{\mathcal{B}}) \leq 1 + \log{\cardinality{\mathcal{B}}}$ which strictly improves the bound of
~\cite{Golovin:2011:AST:2208436.2208448} and brings the approximation ratio closer to that of the greedy algorithm
for the deterministic set cover problem~\cite{chvatal:setcover,johnson:setcover,lovasz:setcover,stein:setcover}.

For the problem of stochastic set cover with imperfect coverage, we are not aware of any known approximation algorithms
prior to this work.

\section{Stochastic Set Cover with Perfect Coverage: \texttt{Greedy} Algorithm}
\label{sec:greedy}
\texttt{Greedy} evaluates items sequentially. Let $F_i$ denote the $i^{th}$
item evaluated by \texttt{Greedy}. Suppose after $i-1$ item evaluations,
$\mathcal{A} \setminus \{F_1, F_2, \ldots, F_{i-1}\} = \mathcal{A}_i$
and $\mathcal{B} \setminus (\bigcup_{j = 1}^{i-1} V(F_j)) = \mathcal{B}_i$; we
refer to $(\mathcal{A}_i, \mathcal{B}_i)$ as the $i^{th}$ \textit{residual system}, and
say that the event $\Gamma(i; \mathcal{A}_i, \mathcal{B}_i)$
occured. Define:
\begin{gather}
  \forall F \in \mathcal{A}_i \text{ s.t. } \sum_{e \in \mathcal{B}_i} q_F(e) > 0: unitprice_i(F, \mathcal{A}_i, \mathcal{B}_i) \defeq \frac{C(F)}{\sum_{e \in \mathcal{B}_i} q_F(e)} \label{def:unitpricei}
\end{gather}

\texttt{Greedy} terminates if $\mathcal{B}_i = \Phi$.
Otherwise \texttt{Greedy} evaluates \newline
$F_i = \arg\min_{F \in \mathcal{A}_i \text{ s.t. } \sum_{e \in \mathcal{B}_i} q_F(e) > 0} unitprice_i(F, \mathcal{A}_i, \mathcal{B}_i)$ and
computes the new residual system $(\mathcal{A}_{i+1}, \mathcal{B}_{i+1})$. In the latter case,
$\forall e \in \mathcal{B}_i \setminus \mathcal{B}_{i+1}$, we say $e$ is covered by $F_i$ in \texttt{Greedy} and denote
this by $F_i \succ e$.

For all $F \in G$, we define $unitprice(F)$ to be the unitprice of $F$ just before its evaluation. Further:
\begin{gather}
  \forall e \text{ s.t. } F_i \succ e: price(e) \defeq unitprice(F_i) \label{eqn:price} \\
  gCov(F_i) \defeq \cardinality{\mathcal{B}_i \setminus \mathcal{B}_{i+1}} = \sum_{e \in \mathcal{B}_i} \mathbf{1}_{F_i \succ e} \label{eqn:gcov}
\end{gather}

If an element $e$ is not covered by any item when \texttt{Greedy} terminates, we set $price(e) \defeq 0$.
Let $G$ denote the set of all items evaluated by \texttt{Greedy}. The termination condition of \texttt{Greedy} guarantees
Constraint (\ref{eqn:constraint}). We now analyze the cost of \texttt{Greedy}.\footnote{The quantity defined in Eqn (\ref{def:unitpricei}) is parameterized by the $i^{th}$ residual system and is a
fixed number and not a random variable. $\forall F \in G, ~unitprice(F)$ is a random variable. If we condition on the $i^{th}$ residual system and the
event that $F_i = F$, then $unitprice(F)$ becomes a fixed number. The denominator in the R.H.S. of Eqn (\ref{def:unitpricei}) is the expected number of elements that will be
covered if $F_i = F$.}

\subsection{Analysis of the \texttt{Greedy} Algorithm}
Let $(\mathcal{A}, \mathcal{B})$ denote the residual system at the start of \texttt{Greedy}. The following fact is
used repeatedly in our analysis.
\begin{fact}
  \label{fact:oblivious}
  Both \texttt{Greedy} and \texttt{optimal} are oblivious algorithms, and item states are independent. This implies
  that the conditional state distribution of an item $F$ -- given that $F$ was evaluated by
  \texttt{Greedy} or \texttt{optimal} or both or neither, and given (partial or complete) information about the state of
  any items other than $F$ -- is the same as the unconditional state distribution of $F$.
\end{fact}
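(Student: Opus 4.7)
The plan is to combine the two hypotheses---obliviousness of both \texttt{Greedy} and \texttt{optimal}, and mutual independence of the item states---via a standard conditional independence argument. Fix an arbitrary item $F \in \mathcal{A}$, and let $Z_G$ and $Z_O$ denote the indicators of the events ``$F$ is evaluated by \texttt{Greedy}'' and ``$F$ is evaluated by \texttt{optimal}'' respectively.

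First I would unpack obliviousness slightly more than the informal definition in the introduction. The decision of an oblivious algorithm to evaluate $F$ must be a measurable function of the states $\{V(F'): F' \neq F\}$ of the other items together with the algorithm's external randomness; it cannot depend on $V(F)$ itself. Under this reading---which is the natural structural witness for the distributional obliviousness definition---$Z_G$ and $Z_O$ are functions of $\{V(F'): F' \neq F\}$ and algorithm coins alone.

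Next I would invoke independence. By assumption the states $\{V(F')\}_{F' \in \mathcal{A}}$ are mutually independent, and algorithm coins are independent of all item states. It follows that $V(F)$ is independent of the joint vector $\bigl(\{V(F')\}_{F' \neq F},\, Z_G,\, Z_O\bigr)$. Therefore, for any event $\mathcal{C}$ measurable with respect to this vector---for example, an event specifying one of the four combinations of $(Z_G, Z_O) \in \{0,1\}^2$ together with arbitrary partial or complete information about $V(F')$ for $F' \neq F$---the conditional distribution of $V(F)$ given $\mathcal{C}$ equals its unconditional distribution $p_F$. This is exactly the conclusion of Fact~\ref{fact:oblivious}.

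The main obstacle is the first step above: the paper's definition of ``oblivious'' is stated distributionally (``conditioning $\ldots$ does not alter the state distribution of $F$''), yet the proof needs the stronger structural statement that $Z_G$ and $Z_O$ are $\sigma\bigl(\{V(F')\}_{F' \neq F},\, \text{coins}\bigr)$-measurable. For \texttt{Greedy} this is transparent from Section~\ref{sec:greedy}: the next item depends only on the current residual system, which is determined by previously revealed states (none of which is $V(F)$ before $F$ itself is selected). For \texttt{optimal}, one must interpret obliviousness as forbidding any policy that ``peeks'' at $V(F)$ before deciding whether to evaluate $F$; with that reading the argument goes through cleanly and the remaining manipulations are routine.
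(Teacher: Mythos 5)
Your proof is correct: the paper states Fact~\ref{fact:oblivious} without proof, treating it as immediate from the definitions, and your argument (the decision of an oblivious algorithm to evaluate $F$ is measurable in the states $\{V(F'): F' \neq F\}$ and external coins, so by mutual independence $V(F)$ is independent of any event built from these, hence its conditional distribution is $p_F$) is exactly the justification being taken for granted. Your flag about the definitional gap is also well placed: the paper's purely distributional phrasing of obliviousness would not by itself license conditioning on joint events such as $F \in O \cap G$ together with partial information on other items' states (as is done in Claims~\ref{claim:ocosteqorev}--\ref{claim:ocovleqden}), whereas the structural reading you adopt--no policy may peek at $V(F)$ before deciding to evaluate $F$--is clearly the intended one and makes the fact hold as used.
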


\begin{claim}
\label{claim:grecost}
\begin{gather}
  \sum_{F \in \mathcal{A}} \Pr[F \in G]C(F) = \sum_{e \in \mathcal{B}} \E[price(e)] \label{eqn:claimgrecost}
\end{gather}
\end{claim}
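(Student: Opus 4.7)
The plan is to interpret the left-hand side as the expected cost of \texttt{Greedy}, namely $\E[\sum_{F \in G} C(F)]$, and then decompose this expectation iteration by iteration, matching the cost of $F_i$ with the prices of the elements it covers. More precisely, since $F \in G$ iff $F$ is evaluated at some iteration, I would rewrite the LHS as $\E[\sum_i C(F_i)\mathbf{1}[i \le |G|]]$ and aim to show that the per-iteration cost $C(F_i)$ equals, in conditional expectation, $\sum_{e : F_i \succ e} price(e)$.

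The central step will be an application of Fact~\ref{fact:oblivious}. I would condition on the event $\Gamma(i; \mathcal{A}_i, \mathcal{B}_i)$ together with the identity of $F_i$ (which is deterministic given the residual system). Under this conditioning, $unitprice(F_i) = C(F_i)/\sum_{e \in \mathcal{B}_i} q_{F_i}(e)$ is a fixed number, and moreover, since \texttt{Greedy} is oblivious and item states are independent, the conditional probability that $e \in V(F_i)$ for each $e \in \mathcal{B}_i$ remains $q_{F_i}(e)$. Consequently,
\begin{gather*}
\E[gCov(F_i) \mid \Gamma(i; \mathcal{A}_i, \mathcal{B}_i), F_i] \;=\; \sum_{e \in \mathcal{B}_i} q_{F_i}(e),
\end{gather*}
so that the conditional expectation of $unitprice(F_i) \cdot gCov(F_i)$ equals exactly $C(F_i)$. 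But by definition of $price(e)$, we have $unitprice(F_i) \cdot gCov(F_i) = \sum_{e : F_i \succ e} price(e)$, which gives the desired per-iteration identity.

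The remaining work is bookkeeping. I would remove the conditioning via the tower law, sum the resulting identities across iterations $i = 1, \ldots, |G|$, and observe that each $e \in \mathcal{B}$ is covered in at most one iteration (once $e$ leaves $\mathcal{B}_i$ it cannot be covered again), with $price(e) = 0$ by convention for any $e$ never covered. This lets me collapse $\sum_i \sum_{e : F_i \succ e} price(e)$ into $\sum_{e \in \mathcal{B}} price(e)$, and one more application of linearity of expectation yields the RHS.

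The main technical obstacle is justifying the step $\E[\mathbf{1}[F_i \succ e] \mid \Gamma(i; \mathcal{A}_i, \mathcal{B}_i), F_i = F] = q_F(e)$ for $e \in \mathcal{B}_i$, because the conditioning event encodes information about the states of the items $F_1, \ldots, F_{i-1}$ and about the algorithm's past choices. This is precisely what Fact~\ref{fact:oblivious} is designed to handle: obliviousness plus independence ensures that neither the partial information revealed about other items, nor the fact that \texttt{Greedy}'s selection rule led to picking $F$ at step $i$, perturbs the marginal distribution of $V(F)$. Once this is spelled out, the rest is routine.
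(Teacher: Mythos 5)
Your proposal is correct and is essentially the paper's own argument: both proofs condition on the residual system $\Gamma(i;\mathcal{A}_i,\mathcal{B}_i)$, invoke Fact~\ref{fact:oblivious} to get $\E[gCov(F_i)\mid\Gamma(i;\mathcal{A}_i,\mathcal{B}_i)]=\sum_{e\in\mathcal{B}_i}q_{F_i}(e)$, and cancel this against $unitprice_i$ so that the cost of each evaluation is charged exactly to the expected total price of the elements it covers. The only difference is bookkeeping direction --- you start from the cost side and decompose by iteration, while the paper starts from $\sum_{e\in\mathcal{B}}\E[price(e)]$ and decomposes by covering item --- which does not change the substance.
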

\begin{proof}
  \begin{gather}
    \sum_{e \in \mathcal{B}} \E[price(e)] = \sum_{e \in \mathcal{B}} \sum_{F \in \mathcal{A}}\Pr[F \succ e]\E[price(e) \vert F \succ e]
    = \sum_{e \in \mathcal{B}} \sum_{F \in \mathcal{A}} \Pr[F \in G] \Pr[F \succ e \vert F \in G] \E[price(e) \vert F \succ e] \nonumber \\
    = \sum_{F \in \mathcal{A}} \Pr[F \in G] \sum_{e \in \mathcal{B}} \Pr[F \succ e \vert F \in G] \E[price(e) \vert F \succ e]
    = \sum_{F \in \mathcal{A}} \Pr[F \in G] \sum_{e \in \mathcal{B}} \E[\mathbf{1}_{F \succ e} price(e) \vert F \in G] \nonumber \\
    = \sum_{F \in \mathcal{A}} \Pr[F \in G] \sum_{e \in \mathcal{B}} ~~~\sum_{\substack{\Gamma(i; \mathcal{A}_i, \mathcal{B}_i) \\ \text{s.t. } F_i = F}} \Pr[\Gamma(i; \mathcal{A}_i, \mathcal{B}_i) \vert F \in G] \E[\mathbf{1}_{F \succ e} price(e) \vert \Gamma(i; \mathcal{A}_i, \mathcal{B}_i)] \nonumber \\
    = \sum_{F \in \mathcal{A}} \Pr[F \in G] \sum_{\substack{\Gamma(i; \mathcal{A}_i, \mathcal{B}_i) \\ \text{s.t. } F_i = F}} \Pr[\Gamma(i; \mathcal{A}_i, \mathcal{B}_i) \vert F \in G] \sum_{e \in \mathcal{B}} \E[\mathbf{1}_{F \succ e} price(e) \vert \Gamma(i; \mathcal{A}_i, \mathcal{B}_i)] \nonumber \\
    = \sum_{F \in \mathcal{A}} \Pr[F \in G] \sum_{\substack{\Gamma(i; \mathcal{A}_i, \mathcal{B}_i) \\ \text{s.t. } F_i = F}} \Pr[\Gamma(i; \mathcal{A}_i, \mathcal{B}_i) \vert F \in G] unitprice_i(F, \mathcal{A}_i, \mathcal{B}_i) \E\left[\left(\sum_{e \in \mathcal{B}_i} \mathbf{1}_{F \succ e} \right) \bigg\vert \Gamma(i; \mathcal{A}_i, \mathcal{B}_i) \right] \nonumber \\
    = \sum_{F \in \mathcal{A}} \Pr[F \in G] \sum_{\substack{\Gamma(i; \mathcal{A}_i, \mathcal{B}_i) \\ \text{s.t. } F_i = F}} \Pr[\Gamma(i; \mathcal{A}_i, \mathcal{B}_i) \vert F \in G] unitprice_i(F, \mathcal{A}_i, \mathcal{B}_i) \E[gCov(F) \vert \Gamma(i; \mathcal{A}_i, \mathcal{B}_i)] \nonumber \\
    = \sum_{F \in \mathcal{A}} \Pr[F \in G] \sum_{\substack{\Gamma(i; \mathcal{A}_i, \mathcal{B}_i) \\ \text{s.t. } F_i = F}} \Pr[\Gamma(i; \mathcal{A}_i, \mathcal{B}_i) \vert F \in G] C(F) \nonumber = \sum_{F \in \mathcal{A}} \Pr[F \in G] C(F) \nonumber
  \end{gather}
\end{proof}

Let $O$ be the set of items evaluated by the optimal algorithm (\texttt{optimal}).
For an item $F$, we define:
\begin{gather}
oRev(F) \defeq
  \begin{cases}
    unitprice(F) gCov(F) & \forall F \in O \cap G \\
    C(F) & \forall F \in O \setminus G \\
    0 & \forall F \not\in O
  \end{cases}
\label{eqn:orev}
\end{gather}

\begin{claim}
\label{claim:ocosteqorev}
\begin{gather}
  \sum_{F \in \mathcal{A}} \E[oRev(F)] = \E\left[\sum_{F \in O} C(F)\right] \label{eqn:ocosteqorev}
\end{gather}
\end{claim}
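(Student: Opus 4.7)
The plan is to mirror the structure of the proof of Claim \ref{claim:grecost}, conditioning on the residual system at the moment \texttt{Greedy} would evaluate $F$. The first step is to split $\E[oRev(F)]$ according to the three cases in the definition:
\begin{gather}
  \E[oRev(F)] = \E[unitprice(F)\, gCov(F)\, \mathbf{1}_{F \in O \cap G}] + \E[C(F)\, \mathbf{1}_{F \in O \setminus G}]. \nonumber
\end{gather}
The second term equals $C(F)\Pr[F \in O \setminus G]$ immediately. The whole game is therefore to show that the first term equals $C(F)\Pr[F \in O \cap G]$; once we have this, summing them gives $C(F)\Pr[F \in O]$, and summing over $F \in \mathcal{A}$ yields the claim by linearity of expectation.

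For the first term I would condition on the partitioning events $\Gamma(i; \mathcal{A}_i, \mathcal{B}_i)$ with $F_i = F$, which together cover the event $\{F \in G\}$. Given such a $\Gamma$, the quantity $unitprice_i(F, \mathcal{A}_i, \mathcal{B}_i) = C(F)/\sum_{e \in \mathcal{B}_i} q_F(e)$ is a fixed number, and by Fact \ref{fact:oblivious} the conditional state distribution of $F$ is still its unconditional distribution, so
\begin{gather}
  \E[gCov(F)\mid \Gamma(i;\mathcal{A}_i,\mathcal{B}_i)] = \sum_{e \in \mathcal{B}_i} q_F(e). \nonumber
\end{gather}
The product of these two quantities collapses to $C(F)$, exactly as in the last two lines of the proof of Claim \ref{claim:grecost}. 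I would then argue that the same identity holds after further conditioning on $F \in O$: the event $\{F \in O\}$ depends only on the behavior of \texttt{optimal} and the states of items other than $F$, so by Fact \ref{fact:oblivious} again it leaves $F$'s state distribution (and hence $\E[gCov(F)\mid \cdot]$) unchanged.

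Having established $\E[unitprice(F)\, gCov(F)\mid \Gamma(i;\mathcal{A}_i,\mathcal{B}_i), F \in O] = C(F)$ for every relevant $\Gamma$, I would sum over the $\Gamma$'s with $F_i = F$ to obtain $\E[unitprice(F)\, gCov(F)\, \mathbf{1}_{F \in O \cap G}] = C(F)\Pr[F \in O \cap G]$. Adding the $O \setminus G$ piece gives $\E[oRev(F)] = C(F)\Pr[F \in O]$, and summing over $\mathcal{A}$ finishes the proof since $\sum_{F \in \mathcal{A}} C(F)\Pr[F \in O] = \E[\sum_{F \in O} C(F)]$.

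The main obstacle is the joint conditioning step: one must simultaneously condition on the residual system (a function of the states of $F_1,\ldots,F_{i-1}$) and on membership $F \in O$ (a function of states of items evaluated by \texttt{optimal}), and still conclude that $F$'s own state distribution is untouched. This is not automatic—it relies on both the mutual independence of item states and on the obliviousness of \emph{both} algorithms, i.e.\ on the full strength of Fact \ref{fact:oblivious}. The remaining manipulations are essentially the same chain of conditional-expectation rewrites used in Claim \ref{claim:grecost}.
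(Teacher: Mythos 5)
Your proposal is correct and follows essentially the same route as the paper's proof: the same case split on $F \in O \cap G$ versus $F \in O \setminus G$, the same conditioning on the residual-system events $\Gamma(i;\mathcal{A}_i,\mathcal{B}_i)$ with $F_i = F$, and the same appeal to Fact \ref{fact:oblivious} to conclude that $\E[gCov(F)\mid\cdot] = \sum_{e \in \mathcal{B}_i} q_F(e)$ so that the product with $unitprice_i$ collapses to $C(F)$. The subtlety you flag about joint conditioning on the residual system and on $F \in O$ is exactly the point the paper handles by invoking Fact \ref{fact:oblivious} at the corresponding step.
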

\begin{proof}
  \begin{gather}
    \sum_{F \in \mathcal{A}} \E[oRev(F)] = \sum_{F \in \mathcal{A}} \Pr[F \in O \cap G]\E[oRev(F) \vert F \in O \cap G] + \Pr[F \in O \setminus G]C(F) \nonumber \\
    = \sum_{F \in \mathcal{A}} \Pr[F \in O \cap G] \sum_{\substack{\Gamma(i; \mathcal{A}_i, \mathcal{B}_i) \\ \text{s.t. } F_i = F}} \Pr[\Gamma(i; \mathcal{A}_i, \mathcal{B}_i) \vert F \in O \cap G]\E[oRev(F) \vert \Gamma(i; \mathcal{A}_i, \mathcal{B}_i) \wedge F \in O \cap G] + \Pr[F \in O \setminus G]C(F)  \nonumber \\
    = \sum_{F \in \mathcal{A}} \Big(\big\{\Pr[F \in O \cap G] \sum_{\substack{\Gamma(i; \mathcal{A}_i, \mathcal{B}_i) \\ \text{s.t. } F_i = F}} \Pr[\Gamma(i; \mathcal{A}_i, \mathcal{B}_i) \vert F \in O \cap G] \cdot \nonumber \\
    unitprice_i(F, \mathcal{A}_i, \mathcal{B}_i) \E[gCov(F) \vert \Gamma(i; \mathcal{A}_i, \mathcal{B}_i) \wedge F \in O \cap G]\big\}
    + \Pr[F \in O \setminus G]C(F)\Big) \nonumber \\
    \stackrel{\text{Fact \ref{fact:oblivious}}}{=}
    \sum_{F \in \mathcal{A}} \Pr[F \in O \cap G] \sum_{\substack{\Gamma(i; \mathcal{A}_i, \mathcal{B}_i) \\ \text{s.t. } F_i = F}} \Pr[\Gamma(i; \mathcal{A}_i, \mathcal{B}_i) \vert F \in O \cap G] C(F) + \Pr[F \in O \setminus G]C(F) = \E\left[\sum_{F \in O} C(F)\right] \nonumber
  \end{gather}
\end{proof}

Let $e_1, e_2, \ldots e_{\cardinality{\mathcal{B}}}$ be an ordering of elements in $\mathcal{B}$ such that
elements covered earlier precede the elements covered later, with ties broken arbitrarily.
\begin{claim}
\label{claim:oreveqcost}
\begin{gather}
  \forall (i; \mathcal{A}_i, \mathcal{B}_i), \forall j > \cardinality{\mathcal{B} \setminus \mathcal{B}_i} \text{ s.t. } \Pr[F_i \succ e_j \vert \Gamma(i; \mathcal{A}_i, \mathcal{B}_i)] > 0, \forall F \in \mathcal{A}_{i+1}: \nonumber \\
  \E[oRev(F) \vert \Gamma(i; \mathcal{A}_i, \mathcal{B}_i) \wedge F_i \succ e_j] = \Pr[F \in O \vert \Gamma(i; \mathcal{A}_i, \mathcal{B}_i) \wedge F_i \succ e_j]C(F) \label{eqn:oreveqcost}
\end{gather}
\end{claim}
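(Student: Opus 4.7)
The plan is to mimic the structure of Claim \ref{claim:ocosteqorev}, but carry the additional conditioning event $E \defeq \Gamma(i; \mathcal{A}_i, \mathcal{B}_i) \wedge F_i \succ e_j$ through the calculation, and exploit the fact that $F \in \mathcal{A}_{i+1}$ forces any Greedy evaluation of $F$ to happen strictly after step $i$.

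First, I would case-split on the three alternatives that appear in the definition \eqref{eqn:orev} of $oRev(F)$, writing
\[
\E[oRev(F)\mid E] = \Pr[F \in O\cap G \mid E]\,\E[unitprice(F)\cdot gCov(F)\mid E,\, F\in O\cap G] + \Pr[F\in O\setminus G\mid E]\cdot C(F).
\]
The $F\notin O$ term is zero, and the $F\in O\setminus G$ term is already in the desired form. The work is to show that the $F\in O\cap G$ summand equals $\Pr[F\in O\cap G\mid E]\cdot C(F)$, since adding the two $O$-terms then gives $\Pr[F\in O\mid E]\cdot C(F)$.

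Next, since $F\in \mathcal{A}_{i+1}$, on the event $F\in G$ Greedy evaluates $F$ at some iteration $k \ge i+1$, so I further condition on the residual system $\Gamma(k;\mathcal{A}_k,\mathcal{B}_k)$ with $F_k = F$, exactly as in Claim \ref{claim:ocosteqorev}. Once this is done, $unitprice(F)$ becomes the constant $unitprice_k(F,\mathcal{A}_k,\mathcal{B}_k)$, and I am left with
\[
\E[gCov(F)\mid \Gamma(k;\mathcal{A}_k,\mathcal{B}_k)\wedge E \wedge F\in O\cap G] = \sum_{e\in \mathcal{B}_k}\Pr[e\in V(F)\mid \Gamma(k;\mathcal{A}_k,\mathcal{B}_k)\wedge E \wedge F\in O\cap G].
\]
Here Fact \ref{fact:oblivious} is the crucial step: every event in the conditioning concerns either the states of items $F_1,\dots,F_{k-1}$ (none of which is $F$, since $k > i$ and $F\in\mathcal{A}_{i+1}$) or the membership of $F$ in the oblivious evaluation sets $G$ and $O$. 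Hence the conditional probability above equals the unconditional $q_F(e)$, and by definition \eqref{def:unitpricei} the product $unitprice_k(F,\mathcal{A}_k,\mathcal{B}_k)\cdot \sum_{e\in\mathcal{B}_k} q_F(e)$ collapses to $C(F)$.

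The main obstacle is a bookkeeping one: verifying that every event being conditioned on in the innermost expectation (including $F_i \succ e_j$ and the compound event $F\in O\cap G$) is measurable with respect to information about items other than $F$, so that Fact \ref{fact:oblivious} genuinely applies and leaves the distribution of $V(F)$ unchanged. The hypothesis $F\in\mathcal{A}_{i+1}$ together with $j > |\mathcal{B}\setminus\mathcal{B}_i|$ is precisely what guarantees this: $F$ plays no role in generating $\Gamma(i;\mathcal{A}_i,\mathcal{B}_i)$ or in the event $F_i \succ e_j$, and the obliviousness of Greedy and \texttt{optimal} shields $V(F)$ from the membership events. Once this is noted, summing the inner expression over all valid $\Gamma(k;\mathcal{A}_k,\mathcal{B}_k)$ telescopes the conditional probabilities to $1$, yielding $\Pr[F\in O\cap G\mid E]\cdot C(F)$, and combining with the $O\setminus G$ term completes the proof.
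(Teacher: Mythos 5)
Your proposal is correct and follows essentially the same route as the paper's proof: decompose $\E[oRev(F)\mid E]$ by the cases of definition (\ref{eqn:orev}), refine the $F\in O\cap G$ term by conditioning on the later residual system $\Gamma(k;\mathcal{A}_k,\mathcal{B}_k)$ with $F_k=F$, and invoke Fact \ref{fact:oblivious} so that $unitprice_k(F,\mathcal{A}_k,\mathcal{B}_k)\cdot\E[gCov(F)\mid\cdot] = C(F)$, after which the conditional probabilities telescope. Your explicit bookkeeping that all conditioning events concern only items other than $F$ (since $F\in\mathcal{A}_{i+1}$ forces $F_i\neq F$ and $k>i$) is exactly the justification the paper compresses into its single citation of Fact \ref{fact:oblivious}.
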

\begin{proof}
  \begin{gather}
    \forall (i; \mathcal{A}_i, \mathcal{B}_i), \forall j > \cardinality{\mathcal{B} \setminus \mathcal{B}_i} \text{ s.t. } \Pr[F_i \succ e_j \vert \Gamma(i; \mathcal{A}_i, \mathcal{B}_i)] > 0, \forall F \in \mathcal{A}_{i+1}: \nonumber \\
    \E[oRev(F) \vert \Gamma(i; \mathcal{A}_i, \mathcal{B}_i) \wedge F_i \succ e_j] \nonumber \\
    = \Pr[F \in O \setminus G \vert \Gamma(i; \mathcal{A}_i, \mathcal{B}_i) \wedge F_i \succ e_j]C(F) \nonumber \\
    + \Pr[F \in O \cap G \vert \Gamma(i; \mathcal{A}_i, \mathcal{B}_i) \wedge F_i \succ e_j]\E[oRev(F) \vert \Gamma(i; \mathcal{A}_i, \mathcal{B}_i) \wedge F_i \succ e_j \wedge F \in O \cap G] \nonumber \\
    =  \Pr[F \in O \setminus G \vert \Gamma(i; \mathcal{A}_i, \mathcal{B}_i) \wedge F_i \succ e_j]C(F)
    + \Pr[F \in O \cap G \vert \Gamma(i; \mathcal{A}_i, \mathcal{B}_i) \wedge F_i \succ e_j] \cdot \nonumber \\
     \bigg(\sum_{\substack{\Gamma(k; \mathcal{A}_k, \mathcal{B}_k) \\
     \text{ s.t. } k > i \wedge F_k = F}} \Pr[\Gamma(k; \mathcal{A}_k, \mathcal{B}_k) \vert \Gamma(i; \mathcal{A}_i, \mathcal{B}_i) \wedge F_i \succ e_j \wedge F \in O \cap G]
    unitprice_k(F, \mathcal{A}_k, \mathcal{B}_k) \cdot \nonumber \\
    \E[gCov(F) \vert \Gamma(k; \mathcal{A}_k, \mathcal{B}_k) \wedge \Gamma(i; \mathcal{A}_i, \mathcal{B}_i) \wedge F_i \succ e_j \wedge F \in O \cap G] \bigg) \nonumber \\
    \stackrel{\text{Fact \ref{fact:oblivious}}}{=}  \Pr[F \in O \setminus G \vert \Gamma(i; \mathcal{A}_i, \mathcal{B}_i) \wedge F_i \succ e_j]C(F)
    + \Pr[F \in O \cap G \vert \Gamma(i; \mathcal{A}_i, \mathcal{B}_i) \wedge F_i \succ e_j]C(F) \label{eqn:state} \\
    = \Pr[F \in O \vert \Gamma(i; \mathcal{A}_i, \mathcal{B}_i) \wedge F_i \succ e_j]C(F) \nonumber
  \end{gather}
\end{proof}

Let $F \gets e$ denote the event that element $e$ was covered by item $F$ in \texttt{optimal}.
We assume WLOG that \texttt{optimal} is also a sequential algorithm like \texttt{Greedy}. Elements removed from the residual
system of \texttt{optimal} due to the evaluation of item $F$ are considered to be covered by $F$ in \texttt{optimal}.
\begin{claim}
\label{claim:ocovleqden}
\begin{gather}
  \forall (i; \mathcal{A}_i, \mathcal{B}_i), \forall j > \cardinality{\mathcal{B} \setminus \mathcal{B}_i} \text{ s.t. } \Pr[F_i \succ e_j \vert \Gamma(i; \mathcal{A}_i, \mathcal{B}_i)] > 0, \forall F \in \mathcal{A}_{i+1}: \nonumber \\
  \E\left[\sum_{e \in \mathcal{B}_i} \mathbf{1}_{F \gets e} \bigg\vert \Gamma(i; \mathcal{A}_i, \mathcal{B}_i) \wedge F_i \succ e_j \right] \leq \Pr[F \in O \vert \Gamma(i; \mathcal{A}_i, \mathcal{B}_i) \wedge F_i \succ e_j] \sum_{e \in \mathcal{B}_i} q_F(e) \label{eqn:ocovleqden}
\end{gather}
\end{claim}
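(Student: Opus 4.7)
The plan is to split the expectation on the left-hand side according to whether $F$ is chosen by \texttt{optimal}, then dominate $\mathbf{1}_{F \gets e}$ by $\mathbf{1}_{e \in V(F)}$, and finally invoke Fact~\ref{fact:oblivious} to decouple the state of $F$ from the conditioning. Write $E$ for the conditioning event $\Gamma(i;\mathcal{A}_i,\mathcal{B}_i)\wedge F_i\succ e_j$.

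First, since an element can only be covered in \texttt{optimal} by an item that \texttt{optimal} actually evaluates, $\mathbf{1}_{F \gets e} = 0$ whenever $F\notin O$. Hence
\begin{align*}
\E\Bigl[\sum_{e \in \mathcal{B}_i} \mathbf{1}_{F \gets e} \Bigm| E\Bigr]
= \Pr[F \in O \mid E]\cdot \E\Bigl[\sum_{e \in \mathcal{B}_i} \mathbf{1}_{F \gets e} \Bigm| E \wedge F \in O\Bigr].
\end{align*}
Second, if $F\in O$ then the elements $F$ covers in \texttt{optimal} form a subset of $V(F)$, so $\mathbf{1}_{F\gets e}\le \mathbf{1}_{e\in V(F)}$ pointwise. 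This gives
\begin{align*}
\E\Bigl[\sum_{e \in \mathcal{B}_i} \mathbf{1}_{F \gets e} \Bigm| E \wedge F \in O\Bigr]
\le \E\Bigl[\sum_{e \in \mathcal{B}_i} \mathbf{1}_{e \in V(F)} \Bigm| E \wedge F \in O\Bigr].
\end{align*}

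Third, since $F\in\mathcal{A}_{i+1}$, $F$ is distinct from $F_1,\dots,F_i$, so the event $\Gamma(i;\mathcal{A}_i,\mathcal{B}_i)$ depends only on partial state information of items other than $F$, the event $F_i\succ e_j$ depends only on $V(F_i)$ (again an item other than $F$), and $F\in O$ records only \texttt{optimal}'s evaluation decision. This is precisely the kind of conditioning that Fact~\ref{fact:oblivious} asserts leaves the state distribution of $F$ unchanged, so $\Pr[e\in V(F)\mid E\wedge F\in O] = q_F(e)$ and the inner expectation equals $\sum_{e\in\mathcal{B}_i} q_F(e)$. Combining the three displays yields the claim.

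The reasoning is essentially a conditioning manipulation; the only substantive point requiring care is the third step, where one must check that every piece of the conditioning event is describable in terms the hypotheses of Fact~\ref{fact:oblivious} allow (states of items other than $F$, plus membership in $G$ or $O$), which is exactly why the quantifier $F\in\mathcal{A}_{i+1}$ (and not merely $F\in\mathcal{A}$) appears in the statement.
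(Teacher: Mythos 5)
Your proposal is correct and follows the paper's own proof essentially step for step: factor out $\Pr[F \in O \mid \cdot]$ since $\mathbf{1}_{F \gets e}$ vanishes when $F \notin O$, dominate $\mathbf{1}_{F \gets e}$ by $\mathbf{1}_{e \in V(F)}$, and apply Fact~\ref{fact:oblivious} to replace the conditional expectation by $\sum_{e \in \mathcal{B}_i} q_F(e)$. Your extra remark on why the conditioning event is admissible for Fact~\ref{fact:oblivious} (because $F \in \mathcal{A}_{i+1}$ is distinct from $F_1,\dots,F_i$) is a sound elaboration of what the paper leaves implicit.
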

\begin{proof}
  Consider any $(i; \mathcal{A}_i, \mathcal{B}_i), j > \cardinality{\mathcal{B} \setminus \mathcal{B}_i} \text{ s.t. } \Pr[F_i \succ e_j \vert \Gamma(i; \mathcal{A}_i, \mathcal{B}_i)] > 0$.
  $F \gets e \implies e \in V(F)$. Hence,
  \begin{gather}
    \forall F \in \mathcal{A}_{i+1}: ~\E\left[\sum_{e \in \mathcal{B}_i} \mathbf{1}_{F \gets e} \bigg\vert \Gamma(i; \mathcal{A}_i, \mathcal{B}_i) \wedge F_i \succ e_j \right] \nonumber \\
    = \Pr[F \in O \vert \Gamma(i; \mathcal{A}_i, \mathcal{B}_i) \wedge F_i \succ e_j] \E\left[\sum_{e \in \mathcal{B}_i} \mathbf{1}_{F \gets e} \bigg\vert \Gamma(i; \mathcal{A}_i, \mathcal{B}_i) \wedge F_i \succ e_j \wedge F \in O \right] \nonumber \\
    \leq \Pr[F \in O \vert \Gamma(i; \mathcal{A}_i, \mathcal{B}_i) \wedge F_i \succ e_j] \E\left[\sum_{e \in \mathcal{B}_i} \mathbf{1}_{e \in V(F)} \bigg\vert \Gamma(i; \mathcal{A}_i, \mathcal{B}_i) \wedge F_i \succ e_j \wedge F \in O \right] \nonumber \\
    \stackrel{\text{Fact \ref{fact:oblivious}}}{=} \Pr[F \in O \vert \Gamma(i; \mathcal{A}_i, \mathcal{B}_i) \wedge F_i \succ e_j] \sum_{e \in \mathcal{B}_i} q_F(e) \label{eqn:pcoveqden}
  \end{gather}
\end{proof}

\begin{fact}
\label{fact:atleastmin}
For non-negative numbers $a_1, \ldots, a_{\ell}$, and $b_1, \ldots, b_{\ell}$ s.t. $\sum_{t = 1}^{\ell} b_t > 0$:
\begin{gather}
\frac{\sum_{t} a_t}{\sum_{t} b_t} \geq \min_{t \vert b_t > 0} \frac{a_t}{b_t} \label{eqn:atleastmin}
\end{gather}
\end{fact}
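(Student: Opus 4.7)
The plan is to prove this elementary fact by the standard ``ratio of sums dominates the minimum ratio'' argument. Let $m \defeq \min_{t \mid b_t > 0} \frac{a_t}{b_t}$ denote the quantity on the right-hand side. I want to show $\sum_t a_t \geq m \sum_t b_t$, from which the claim follows on dividing both sides by $\sum_t b_t > 0$.

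To obtain $\sum_t a_t \geq m \sum_t b_t$, I would establish the termwise inequality $a_t \geq m b_t$ for every index $t$, and then sum over $t$. There are two cases. If $b_t > 0$, then by definition of $m$ we have $\frac{a_t}{b_t} \geq m$, so $a_t \geq m b_t$. If $b_t = 0$, then $m b_t = 0$, and since $a_t \geq 0$ by hypothesis, we again get $a_t \geq m b_t$. Note that if all $b_t$ are zero the hypothesis $\sum_t b_t > 0$ fails, so the minimum on the right is taken over a nonempty set and $m$ is well-defined; moreover $m \geq 0$ since both $a_t$ and $b_t$ are non-negative, which is what is needed in the $b_t = 0$ case.

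There is no serious obstacle here: the only subtlety is making sure the $b_t = 0$ terms do not cause trouble, which is handled by the non-negativity assumption on the $a_t$'s together with the observation $m \geq 0$. Summing the termwise inequality and dividing by $\sum_t b_t$ completes the proof in a couple of lines.
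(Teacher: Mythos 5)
Your proof is correct: the termwise bound $a_t \geq m b_t$ (using non-negativity and $m \geq 0$ when $b_t = 0$), followed by summing and dividing by $\sum_t b_t > 0$, is exactly the standard argument, and the edge cases are handled properly. The paper states this Fact without proof, so your write-up simply supplies the routine verification the author left implicit; there is nothing to add or compare.
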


\begin{theorem}
  \label{theo:greedy}
  Algorithm \texttt{Greedy} has an approximation ratio of $H(\cardinality{\mathcal{B}})$.
\end{theorem}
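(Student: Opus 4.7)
The plan is to adapt the classical $H(n)$ analysis of the deterministic greedy set cover algorithm to the stochastic setting, using the four claims above as the main algebraic tools. By Claim \ref{claim:grecost} the expected cost of \texttt{Greedy} equals $\sum_{j=1}^{\cardinality{\mathcal{B}}}\E[price(e_j)]$, and by Claim \ref{claim:ocosteqorev} the expected cost of \texttt{optimal} equals $\sum_{F \in \mathcal{A}}\E[oRev(F)]$, so the theorem reduces to showing $\sum_{j=1}^{\cardinality{\mathcal{B}}}\E[price(e_j)] \leq H(\cardinality{\mathcal{B}})\sum_{F \in \mathcal{A}}\E[oRev(F)]$.

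I would establish this via the per-index inequality $\E[price(e_j)] \leq \frac{1}{\cardinality{\mathcal{B}}-j+1}\,\E[\sum_{F \in O} C(F)]$, whose sum over $j$ produces the factor $H(\cardinality{\mathcal{B}})$. Fix $j$ and decompose by conditioning on each event $E = \Gamma(i;\mathcal{A}_i,\mathcal{B}_i)\wedge F_i\succ e_j$ with $j > \cardinality{\mathcal{B}\setminus\mathcal{B}_i}$; under $E$, $price(e_j)=unitprice_i(F_i,\mathcal{A}_i,\mathcal{B}_i)$ is a fixed number. Because \texttt{Greedy} picks $F_i$ as the minimizer of $C(F)/\sum_{e\in\mathcal{B}_i}q_F(e)$ over $\mathcal{A}_i\supseteq\mathcal{A}_{i+1}$, Fact \ref{fact:atleastmin} with weights $w_F=\Pr[F\in O\mid E]$ restricted to $F\in\mathcal{A}_{i+1}$ gives
\[
unitprice_i(F_i,\mathcal{A}_i,\mathcal{B}_i) \;\leq\; \frac{\sum_{F\in\mathcal{A}_{i+1}}w_F\,C(F)}{\sum_{F\in\mathcal{A}_{i+1}}w_F\sum_{e\in\mathcal{B}_i}q_F(e)}.
\]
Claim \ref{claim:oreveqcost} rewrites the numerator as $\sum_{F\in\mathcal{A}_{i+1}}\E[oRev(F)\mid E]$, which is bounded above by $\E[\sum_{F\in O}C(F)\mid E]$, and Claim \ref{claim:ocovleqden} bounds the denominator below by $\E[\sum_{F\in\mathcal{A}_{i+1}}\sum_{e\in\mathcal{B}_i}\mathbf{1}_{F\gets e}\mid E]$. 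Every element of $\mathcal{B}_i$ must be covered in \texttt{optimal}, and no $F_k$ with $k<i$ can be its OPT-cover because $V(F_k)\cap\mathcal{B}_i=\emptyset$; thus the denominator equals $\cardinality{\mathcal{B}_i}$ minus the count of elements of $\mathcal{B}_i$ whose OPT-cover is $F_i$. Since $\cardinality{\mathcal{B}_i}\geq\cardinality{\mathcal{B}}-j+1$ (because $e_1,\ldots,e_{j-1}$ are already covered by some iteration $\leq i$), restricting the inner sum to the suffix $\{e_j,\ldots,e_{\cardinality{\mathcal{B}}}\}\subseteq\mathcal{B}_i$ chosen via the tie-breaking of the ordering should produce the desired lower bound of $\cardinality{\mathcal{B}}-j+1$. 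Multiplying through, weighting by $\Pr[E]$, and summing over all $(i,\Gamma_i)$ with $F_i\succ e_j$ yields the per-$j$ bound, after which summation over $j$ delivers the theorem.

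The main obstacle is the clean $\cardinality{\mathcal{B}}-j+1$ lower bound on the denominator: the naive estimate gives only $\cardinality{\mathcal{B}_i}-\E[gCov(F_i)\mid E]$, which can be strictly smaller. Resolving this likely requires either (a) weakening the averaging by summing $q_F(e)$ over the suffix $\{e_j,\ldots,e_{\cardinality{\mathcal{B}}}\}$ rather than all of $\mathcal{B}_i$ — permissible because $unitprice_i(F_i,\mathcal{A}_i,\mathcal{B}_i)\leq C(F)/\sum_{e\in\mathcal{B}_i}q_F(e)\leq C(F)/\sum_{e\in S}q_F(e)$ for any $S\subseteq\mathcal{B}_i$ — or (b) extending Claims \ref{claim:oreveqcost}--\ref{claim:ocovleqden} so that $F_i$ itself can be folded into the averaging despite its partially revealed state. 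Either route leverages the crucial observation that, after conditioning on $F_i\succ e_j$, the remaining elements of the coverage suffix must still be covered by \texttt{optimal} via items in $\mathcal{A}_{i+1}$, which is exactly what Claim \ref{claim:ocovleqden} was designed to quantify.
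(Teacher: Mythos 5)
Your skeleton matches the paper's (reduce to the per-index bound $\E[price(e_j)] \leq \E[\sum_{F\in O}C(F)]/(\cardinality{\mathcal{B}}-j+1)$, condition on $E = \Gamma(i;\mathcal{A}_i,\mathcal{B}_i)\wedge F_i\succ e_j$, and combine Claims \ref{claim:grecost}--\ref{claim:ocovleqden} with Fact \ref{fact:atleastmin}), but the proof is not complete: the step you yourself flag as ``the main obstacle'' is exactly the missing idea, and neither of your two proposed escapes closes it as written. Your averaging runs only over $F\in\mathcal{A}_{i+1}$, so the denominator you can certify is $\E[\sum_{F\in\mathcal{A}_{i+1}}\sum_{e\in\mathcal{B}_i}\mathbf{1}_{F\gets e}\mid E]$, which falls short of $\cardinality{\mathcal{B}_i}\geq\cardinality{\mathcal{B}}-j+1$ by the (conditional expected) number of elements of $\mathcal{B}_i$ that \texttt{optimal} covers using $F_i$ itself. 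Your route (a) --- summing $q_F(e)$ only over the suffix $\{e_j,\ldots,e_{\cardinality{\mathcal{B}}}\}$ --- does not repair this, because suffix elements can perfectly well have $F_i$ as their OPT-cover (indeed, when $F_i\in O$ the elements it covers in \texttt{Greedy} are natural candidates), so the same deficit reappears; shrinking the index set of $e$ changes nothing about which items' indicators you are allowed to count.

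The paper's resolution is your route (b), and it is not a minor extension of Claims \ref{claim:oreveqcost}--\ref{claim:ocovleqden} but a separate argument specific to $F_i$, enabled by the case split in the definition of $oRev$ (Eqn (\ref{eqn:orev})): for $F\in O\cap G$ one takes $oRev(F)=unitprice(F)\,gCov(F)$ rather than $C(F)$. Since any element of $\mathcal{B}_i$ covered by $F_i$ in \texttt{optimal} lies in $V(F_i)$ and hence is also covered by $F_i$ in \texttt{Greedy}, one has $\sum_{e\in\mathcal{B}_i}\mathbf{1}_{F_i\gets e}\leq gCov(F_i)$ pointwise, which gives
\begin{gather}
\frac{\E[oRev(F_i)\mid E]}{\E[\sum_{e\in\mathcal{B}_i}\mathbf{1}_{F_i\gets e}\mid E]}\;\geq\;\frac{\E[unitprice(F_i)\,gCov(F_i)\mid E\wedge F_i\in O]}{\E[gCov(F_i)\mid E\wedge F_i\in O]}\;=\;unitprice_i(F_i,\mathcal{A}_i,\mathcal{B}_i)=\E[price(e_j)\mid E]. \nonumber
\end{gather}
With this extra per-item ratio bound, $F_i$ can be folded into the mediant inequality (Fact \ref{fact:atleastmin}) alongside the $F\in\mathcal{A}_{i+1}$ terms, and the combined denominator becomes exactly $\cardinality{\mathcal{B}_i}$, since every element of $\mathcal{B}_i$ must be OPT-covered by some item of $\mathcal{A}_i=\{F_i\}\cup\mathcal{A}_{i+1}$ (items $F_1,\ldots,F_{i-1}$ cannot cover them, their states being the same in both algorithms). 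That yields $\sum_{F\in\mathcal{A}}\E[oRev(F)\mid E]/(\cardinality{\mathcal{B}}-j+1)\geq\E[price(e_j)\mid E]$, after which your weighting by $\Pr[E]$ and summation over $j$ go through as you describe. Without this treatment of $F_i$ --- i.e.\ without exploiting the $unitprice\cdot gCov$ form of $oRev$ on $O\cap G$, which your write-up never uses --- the per-$j$ inequality is not established, so the argument as proposed has a genuine gap.
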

\begin{proof}
  Suppose $(\mathcal{A}_i, \mathcal{B}_i)$ is the $i^{th}$ residual system.
  Consider any $j > \cardinality{\mathcal{B} \setminus \mathcal{B}_i}$ s.t. $\Pr[F_i \succ e_j \vert \Gamma(i; \mathcal{A}_i, \mathcal{B}_i)] > 0$.
  We have:
  \begin{gather}
    \E[\sum_{e \in \mathcal{B}_i} \mathbf{1}_{F_i \gets e} \vert \Gamma(i; \mathcal{A}_i, \mathcal{B}_i) \wedge F_i \succ e_j] \nonumber \\
    = \Pr[F_i \in O \vert \Gamma(i; \mathcal{A}_i, \mathcal{B}_i) \wedge F_i \succ e_j] \E[\sum_{e \in \mathcal{B}_i} \mathbf{1}_{F_i \gets e} \vert \Gamma(i; \mathcal{A}_i, \mathcal{B}_i) \wedge F_i \succ e_j \wedge F_i \in O] \nonumber \\
    \leq \Pr[F_i \in O \vert \Gamma(i; \mathcal{A}_i, \mathcal{B}_i) \wedge F_i \succ e_j] \E[gCov(F_i) \vert \Gamma(i; \mathcal{A}_i, \mathcal{B}_i) \wedge F_i \succ e_j \wedge F_i \in O] \label{eqn:moreingreedy}
  \end{gather}
  Eqn (\ref{eqn:moreingreedy}) follows from the fact that the number of elements in $\mathcal{B}_i$ covered by $F_i$ in \texttt{optimal}
  cannot exceed the corresponding number in \texttt{Greedy}.
  Suppose $\E[\sum_{e \in \mathcal{B}_i} \mathbf{1}_{F_i \gets e} \vert \Gamma(i; \mathcal{A}_i, \mathcal{B}_i) \wedge F_i \succ e_j] > 0$.
  We now have:
  \begin{gather}
    \frac{\E[oRev(F_i) \vert \Gamma(i; \mathcal{A}_i, \mathcal{B}_i) \wedge F_i \succ e_j]}{\E[\sum_{e \in \mathcal{B}_i} \mathbf{1}_{F_i \gets e} \vert \Gamma(i; \mathcal{A}_i, \mathcal{B}_i) \wedge F_i \succ e_j]} \nonumber \\
    \stackrel{\text{Eqn (\ref{eqn:moreingreedy})}}{\geq} \frac{\Pr[F_i \in O \vert \Gamma(i; \mathcal{A}_i, \mathcal{B}_i) \wedge F_i \succ e_j] \E[unitprice(F_i) gCov(F_i) \vert \Gamma(i; \mathcal{A}_i, \mathcal{B}_i) \wedge F_i \succ e_j \wedge F_i \in O]}{\Pr[F_i \in O \vert \Gamma(i; \mathcal{A}_i, \mathcal{B}_i) \wedge F_i \succ e_j] \E[gCov(F_i) \vert \Gamma(i; \mathcal{A}_i, \mathcal{B}_i) \wedge F_i \succ e_j \wedge F_i \in O]} \nonumber \\
    = unitprice_i(F_i; \mathcal{A}_i, \mathcal{B}_i) \nonumber \\
    = \E[price(e_j) \vert \Gamma(i; \mathcal{A}_i, \mathcal{B}_i) \wedge F_i \succ e_j] \label{eqn:atleastforfi}
  \end{gather}

  Now consider any $F \in \mathcal{A}_{i+1}$ s.t.
  $\E[\sum_{e \in \mathcal{B}_i} \mathbf{1}_{F \gets e} \vert \Gamma(i; \mathcal{A}_i, \mathcal{B}_i) \wedge F_i \succ e_j] > 0$. We have:
  \begin{gather}
    \frac{\E[oRev(F) \vert \Gamma(i; \mathcal{A}_i, \mathcal{B}_i) \wedge F_i \succ e_j]}{\E[\sum_{e \in \mathcal{B}_i} \mathbf{1}_{F \gets e} \vert \Gamma(i; \mathcal{A}_i, \mathcal{B}_i) \wedge F_i \succ e_j]}
    \stackrel{\text{Eqn (\ref{eqn:oreveqcost}), (\ref{eqn:ocovleqden})}}{\geq}
    \frac{C(F)}{\sum_{e \in \mathcal{B}_i} q_F(e)} \geq
    \E[price(e_j) \vert \Gamma(i; \mathcal{A}_i, \mathcal{B}_i) \wedge F_i \succ e_j] \label{eqn:atleastforf}
  \end{gather}
  Eqn (\ref{eqn:atleastforf}) holds because, in each iteration, \texttt{Greedy} chooses the item which minimizes unitprice.

  We now have:
  \begin{gather}
    \forall (i; \mathcal{A}_i, \mathcal{B}_i), \forall j > \cardinality{\mathcal{B} \setminus \mathcal{B}_i} \text{ s.t. } \Pr[F_i \succ e_j \vert \Gamma(i; \mathcal{A}_i, \mathcal{B}_i)] > 0: \nonumber \\
    \frac{\sum_{F \in \mathcal{A}} \E[oRev(F) \vert \Gamma(i; \mathcal{A}_i, \mathcal{B}_i) \wedge F_i \succ e_j]}{\cardinality{\mathcal{B}} - j + 1}
    \geq \frac{\sum_{F \in \mathcal{A}_i} \E[oRev(F) \vert \Gamma(i; \mathcal{A}_i, \mathcal{B}_i) \wedge F_i \succ e_j]}{\cardinality{\mathcal{B}_i}} \nonumber \\
    = \frac{\E[oRev(F_i) \vert \Gamma(i; \mathcal{A}_i, \mathcal{B}_i) \wedge F_i \succ e_j] + \sum_{F \in \mathcal{A}_{i+1}} \E[oRev(F) \vert \Gamma(i; \mathcal{A}_i, \mathcal{B}_i) \wedge F_i \succ e_j]}{\E[\sum_{e \in \mathcal{B}_i} \mathbf{1}_{F_i \gets e} \vert \Gamma(i; \mathcal{A}_i, \mathcal{B}_i) \wedge F_i \succ e_j] + \sum_{F \in \mathcal{A}_{i + 1}} \E[\sum_{e \in \mathcal{B}_i} \mathbf{1}_{F \gets e} \vert \Gamma(i; \mathcal{A}_i, \mathcal{B}_i) \wedge F_i \succ e_j]} \nonumber \\
    \stackrel{\text{Eqn (\ref{eqn:atleastmin}), (\ref{eqn:atleastforfi}), (\ref{eqn:atleastforf})}}{\geq} \E[price(e_j) \vert \Gamma(i; \mathcal{A}_i, \mathcal{B}_i) \wedge F_i \succ e_j] \label{eqn:atleastprice}
  \end{gather}
  % Eqn (\ref{eqn:ocovleqgcov}) holds due to the fact that conditional on $\Gamma(i; \mathcal{A}_i, \mathcal{B}_i) \wedge F_i \succ e_j$,
  % the number of elements in $\mathcal{B}_i$ covered by $F_i$ in \texttt{Greedy} is
  % always greater than or equal to the number of elements in $\mathcal{B}_i$ covered by $F_i$ in \texttt{optimal}.
  Thus, we have:
  \begin{gather}
    \frac{\E[\sum_{F \in O} C(F)]}{\cardinality{\mathcal{B}} - j + 1} \stackrel{\text{Eqn (\ref{eqn:ocosteqorev})}}{=} \frac{\sum_{F \in \mathcal{A}} \E[oRev(F)]}{\cardinality{\mathcal{B}} - j + 1} \nonumber \\
    = \sum_{\substack{\Gamma(i; \mathcal{A}_i, \mathcal{B}_i) \\ \text{ s.t. } \Pr[F_i \succ e_j \vert \Gamma(i; \mathcal{A}_i, \mathcal{B}_i)] > 0}}
    \Pr[\Gamma(i; \mathcal{A}_i, \mathcal{B}_i) \wedge F_i \succ e_j]
    \frac{\sum_{F \in \mathcal{A}} \E[oRev(F) \vert \Gamma(i; \mathcal{A}_i, \mathcal{B}_i) \wedge F_i \succ e_j]}{\cardinality{\mathcal{B}} - j + 1} \nonumber \\
    \stackrel{\text{Eqn (\ref{eqn:atleastprice})}}{\geq}
    \sum_{\substack{\Gamma(i; \mathcal{A}_i, \mathcal{B}_i) \\ \text{ s.t. } \Pr[F_i \succ e_j \vert \Gamma(i; \mathcal{A}_i, \mathcal{B}_i)] > 0}}
    \Pr[\Gamma(i; \mathcal{A}_i, \mathcal{B}_i) \wedge F_i \succ e_j]
    \E[price(e_j) \vert \Gamma(i; \mathcal{A}_i, \mathcal{B}_i) \wedge F_i \succ e_j] = \E[price(e_j)] \label{eqn:oneprice}
  \end{gather}
  Hence,
  \begin{gather}
    H(\cardinality{\mathcal{B}})\E[\sum_{F \in O} C(F)] = \sum_{j = 1}^{\cardinality{\mathcal{B}}} \frac{\E[\sum_{F \in O} C(F)]}{\cardinality{\mathcal{B}} - j + 1}
    ~~\stackrel{\text{Eqn (\ref{eqn:oneprice})}}{\geq}~~ \sum_{j = 1}^{\cardinality{\mathcal{B}}} \E[price(e_j)]
    ~~\stackrel{\text{Eqn (\ref{eqn:claimgrecost})}}{=}~~ \sum_{F \in \mathcal{A}}\Pr[F \in G]C(F) \nonumber
  \end{gather}
  Hence the theorem holds.
\end{proof}

\section{Stochastic Set Cover with Imperfect Coverage}
We now describe a reduction from
an instance $\mathcal{I}$ of stochastic set cover with imperfect coverage
to an instance $\mathcal{J}$ of stochastic set cover with perfect coverage.
Let $\mathcal{A}$, $\mathcal{B}$, $p_F$, and $q_F$ be the items,
ground set, state distributions, and marginal distributions in $\mathcal{I}$.\footnote{Recall that $q_F$'s are given but $p_F$'s are not assumed to be known.}
Consider the bipartite graph $\mathcal{G} = (\mathcal{A}, \mathcal{B}, \mathcal{E})$ where $\mathcal{E}$
consists of all edges $(F, e) \in \mathcal{A} \times \mathcal{B}$ s.t. $q_F(e) > 0$.
We refer to $\mathcal{G}$ as the \textit{bipartite graph induced by} $\mathcal{I}$.
Define function $\mu_F: supp(p_F) \rightarrow 2^{\mathcal{E}}$ s.t.
$\forall B \in supp(p_F), ~\mu(B) = \{(F_1, e) ~\vert~ (F_1, e) \in \mathcal{E} \wedge (F_1 = F \vee e \in B) \}$.
Suppose the evaluation of item $F$ yields the state $B$ in instance $\mathcal{I}$;
$\mu_F(B)$ can then be readily computed from $\mathcal{G}$.

We create instance $\mathcal{J}$ as follows: the set of items in $\mathcal{J}$ is $\mathcal{A}$ and the ground set is $\mathcal{E}$.
Suppose the evaluation of item $F$ yields the state $B$ in $\mathcal{I}$; we then treat the state of
item $F$ in $\mathcal{J}$ to be $\mu_F(B)$.
This construction implies that the marginal distributions $m_F$ in $\mathcal{J}$
are as follows:
\begin{gather}
\forall F \in \mathcal{A}, \forall (F_1, e) \in \mathcal{E}, ~ m_F((F_1, e)) =
  \begin{cases}
    1 & \text{ if } F_1 = F \\
    q_F(e) & \text{ if } F_1 \neq F
  \end{cases}
\label{eqn:marginalJ}
\end{gather}
This completes the description of instance $\mathcal{J}$.

\begin{theorem}
  \label{theo:imperfect}
  The solution produced by \texttt{Greedy} on $\mathcal{J}$ yields a solution to $\mathcal{I}$
  with an approximation ratio $H(\cardinality{\mathcal{E}})$,
  where $\mathcal{E}$ is the set of edges in the bipartite graph induced by $\mathcal{I}$.\footnote{$\cardinality{\mathcal{E}} \leq \cardinality{\mathcal{A}}\cardinality{\mathcal{B}}$.}
\end{theorem}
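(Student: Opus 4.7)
The plan is to combine Theorem \ref{theo:greedy} applied to instance $\mathcal{J}$ with two bridging claims: (a) the output of \texttt{Greedy} on $\mathcal{J}$ is feasible for $\mathcal{I}$, and (b) any feasible (oblivious) algorithm for $\mathcal{I}$ can be run unchanged on $\mathcal{J}$ and remains feasible with the same cost. Together these yield that the expected cost of \texttt{Greedy} on $\mathcal{J}$ is at most $H(\cardinality{\mathcal{E}}) \cdot \E[\sum_{F \in O} C(F)]$, where $O$ is the item set of the optimum for $\mathcal{I}$.

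First I would check that $\mathcal{J}$ satisfies the perfect coverage hypothesis (\ref{eqn:perfect}): for every $(F_1, e) \in \mathcal{E}$, the definition of $\mu_{F_1}$ forces $(F_1, e) \in \mu_{F_1}(V(F_1))$ via its $F_1 = F_1$ branch, so $\bigcup_{F \in \mathcal{A}} \mu_F(V(F)) = \mathcal{E}$ with probability $1$ and Theorem \ref{theo:greedy} applies to $\mathcal{J}$. For claim (a), fix any realization and any $e \in \bigcup_{F \in \mathcal{A}} V(F)$; some $F_1$ has $e \in V(F_1)$, so $q_{F_1}(e) > 0$ and $(F_1, e) \in \mathcal{E}$. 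Letting $\mathcal{S}$ be the output of \texttt{Greedy} on $\mathcal{J}$, some $F \in \mathcal{S}$ satisfies $(F_1, e) \in \mu_F(V(F))$; the disjunction in the definition of $\mu_F$ forces either $F = F_1 \in \mathcal{S}$ (whence $e \in V(F_1) \subseteq \bigcup_{F \in \mathcal{S}} V(F)$) or $e \in V(F)$, so in either case $e \in \bigcup_{F \in \mathcal{S}} V(F)$, establishing constraint (\ref{eqn:constraint}) for $\mathcal{I}$.

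The hardest step is claim (b). I argue that any feasible oblivious algorithm $\mathcal{O}$ for $\mathcal{I}$, executed without modification on $\mathcal{J}$, is also feasible for $\mathcal{J}$ and incurs exactly the same per-realization cost. Suppose for contradiction there is a realization $\omega$ of positive probability in which $\mathcal{O}$ fails to cover some edge $(F_1, e) \in \mathcal{E}$: then $F_1$ is not evaluated in $\omega$ and no evaluated item has $e \in V(F)(\omega)$. Since $q_{F_1}(e) > 0$, pick $B_1' \in supp(p_{F_1})$ with $e \in B_1'$ and form $\omega'$ by replacing $V(F_1)(\omega)$ with $B_1'$ while leaving every other state unchanged; by mutual independence of item states, $\omega'$ also has positive probability. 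Because $\mathcal{O}$ never evaluates $F_1$ in $\omega$, its transcript depends only on $\{V(F)\}_{F \neq F_1}$, so $\mathcal{O}$ takes the same actions and evaluates the same items in $\omega'$. But now $e \in V(F_1)(\omega') \subseteq \bigcup_{F \in \mathcal{A}} V(F)(\omega')$ while still no evaluated item has $e$ in its state, violating (\ref{eqn:constraint}) for $\mathcal{I}$ and contradicting feasibility of $\mathcal{O}$.

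The main obstacle is claim (b): justifying that $\mathcal{O}$'s actions coincide on $\omega$ and $\omega'$ relies on the oblivious property (Fact \ref{fact:oblivious}) together with the crucial fact that $\mathcal{O}$ never evaluates $F_1$ in $\omega$, so its observations cannot depend on $V(F_1)$; and asserting that $\omega'$ has positive probability uses the mutual independence of item states. Once claim (b) yields the optimum-cost inequality, chaining it with Theorem \ref{theo:greedy} applied to $\mathcal{J}$ and with claim (a) completes the argument.
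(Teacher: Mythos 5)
Your proof is correct and takes essentially the same route as the paper: check that $\mathcal{J}$ satisfies the perfect-coverage assumption (\ref{eqn:perfect}), establish that feasible solutions for $\mathcal{I}$ and $\mathcal{J}$ coincide (with identical cost), and then invoke Theorem \ref{theo:greedy} on $\mathcal{J}$ with ground set $\mathcal{E}$. The only difference is that the paper dismisses the solution-equivalence as ``easy to verify,'' whereas you spell out both directions, including a correct exchange/positive-probability argument for the direction that feasibility for $\mathcal{I}$ implies feasibility for $\mathcal{J}$.
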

\begin{proof}
  It is easy to verify that $S \subseteq \mathcal{A}$ is a valid solution for $\mathcal{I}$ if and only if it is
  a valid solution for $\mathcal{J}$. Our reduction guarantees that $\forall (F, e) \in \mathcal{E}$,
  the state of $F$ in $\mathcal{J}$ contains $(F, e)$ with probability $1$.
  Hence, Assumption (\ref{eqn:perfect}) is satisfied which makes $\mathcal{J}$ an instance of stochastic set cover
  with perfect coverage. The theorem now follows by noting that the size of the ground set in $\mathcal{J}$
  is $\cardinality{\mathcal{E}}$.
\end{proof}

\section{Open Problem}
The greedy algorithm for deterministic set cover problem has an approximation ratio of $H(\ell)$, where $\ell$
is the maximum cardinality of an item~\cite{chvatal:setcover}. Analogously,
does \texttt{Greedy} have an approximation ratio of $H(n)$
for stochastic set cover with perfect coverage where $n$ is the
maximum cardinality of any state which is realizable with positive probability?

\section*{Acknowledgements}
We are grateful to Lisa Hellerstein (NYU) for several useful comments on various drafts of this work, and for
discovering and clearly describing an error in \cite{Liu:2008:NAS:1376616.1376633} which lead to this work.
We are also grateful to Arpit Agarwal, Sepehr Assadi, and Sanjeev Khanna (U. Penn) for
helpful comments on earlier drafts of this work.

\bibliography{greedystochastic}{}
\bibliographystyle{plain}
\end{document}